\documentclass[prl,twocolumn,amsmath,amssymb,10pt,aps,longbibliography,superscriptaddress,citeautoscript,bibnotes,floatfix,nobibnotes]{revtex4-2}

\usepackage{graphicx} 
\usepackage{dcolumn}
\usepackage{bm}
\usepackage{graphicx,color,xcolor}
\usepackage{physics}
\usepackage{comment}
\graphicspath{ {./Figures/} }
\usepackage{textcomp}
\usepackage[utf8]{inputenc}
\usepackage[T1]{fontenc}
\usepackage{amsmath}
\usepackage{amssymb}
\usepackage{amsthm}
\usepackage{graphicx}
\usepackage{subfigure}
\usepackage{tabularx}
\usepackage{booktabs}
\usepackage{multirow}
\usepackage{ragged2e}
\usepackage[colorlinks=true,linkcolor=blue,citecolor=blue]{hyperref} 

\usepackage{xcolor}

\newtheorem{theorem}{Theorem}

\newtheorem{lemma}[theorem]{Lemma}

\usepackage{soul}


\begin{document}

\title{Gluing Randomness via Entanglement: Tight Bound from Second Rényi Entropy}

\author{Wonjun Lee}
\email{wonjunphysics@gmail.com}
\affiliation{College of Natural Sciences, Korea Advanced Institute of Science and Technology, Daejeon, 34141, Korea}
\author{Hyukjoon Kwon}
\email{hjkwon@kias.re.kr}
\affiliation{School of Computational Sciences, Korea Institute for Advanced Study, Seoul, 02455, South Korea}
\author{Gil Young Cho}
\email{gilyoungcho@kaist.ac.kr}
\affiliation{Department of Physics, Korea Advanced Institute of Science and Technology, Daejeon, 34141, Korea}
\affiliation{Center for Artificial Low Dimensional Electronic Systems, Institute for Basic Science, Pohang, 37673, Republic of Korea}

\begin{abstract}
    The efficient generation of random quantum states is a long-standing challenge, motivated by their diverse applications in quantum information processing tasks. In this manuscript, we identify entanglement as the key resource that enables local random unitaries to generate global random states by effectively gluing randomness across the system. Specifically, we demonstrate that approximate random states can be produced from an entangled state $\ket{\psi}$ through the application of local random unitaries. We show that the resulting ensemble forms an approximate state design with an error saturating as $\Theta(e^{-\mathcal{N}_2(\psi)})$, where $\mathcal{N}_2(\psi)$ is the second R\'enyi entanglement entropy of $\ket{\psi}$. Furthermore, we prove that this tight bound also applies to the second R\'enyi entropy of coherence when the ensemble is constructed using coherence-free operations. These results imply that, when restricted to resource-free gates, the quality of the generated random states is determined entirely by the resource content of the initial state. Notably, we find that among all $\alpha$-R\'enyi entropeis, the second R\'enyi entropy yields the tightest bounds. Consequently, these second R\'enyi entropies can be interpreted as the maximal capacities for generating randomness using resource-free operations. Finally, moving beyond approximate state designs, we utilize this entanglement-assisted gluing mechanism to present a novel method for generating pseudorandom states in multipartite systems from a locally entangled state via pseudorandom unitaries in each of parties. 
\end{abstract}

\maketitle

\noindent
\large{{\bf Introduction}}\\
Quantum resource theories provide a rigorous framework for identifying the key ingredients of quantum advantage~\cite{jozsa2003,Brandao2015,Chitambar2019}. Among these, entanglement, magic, and coherence emerge as fundamental pillars~\cite{Horodecki2009,Bravyi2005,Streltsov2017}. This non-interchangeable family is essential for achieving universal quantum computation. A critical benchmark for this capability is the generation of Haar random states, which are the most generic and complex states in Hilbert space. Since producing these states requires the construction of arbitrary unitary operators, the ability to do so is fundamentally limited by the available budget of entanglement, magic, and coherence~\cite{Page1993,Hayden2006Aspects,Napoli2016,Leone2022}. Consequently, the amount of these resources directly constrains the degree of randomness a quantum processor can ultimately produce.

While random states are useful for various quantum information processing tasks~\cite{Ji2018,Knill_2008,Boixo_2018,Huang_2020,elben2023} as well as understanding complex quantum many-body systems~\cite{Srednicki1994, Piroli_2020, Matthew2023}, their exact implementation is challenging due to prohibitive resource requirements~\cite{knill1995}. To address this, approximation schemes such as approximate designs and pseudorandom states have been developed~\cite{Andris2007,Ji2018}. Based on these schemes, it has been shown that high-quality approximations can have significantly smaller amounts of  resources~\cite{aaronson2023quantum,gu2023little,lee2025shallow}. Recently, efficient methods for generating these approximated states have been substantially developed~\cite{Brand_o_2016,Haferkamp2022randomquantum,Ho_2022,Harrow2023,Cotler2023,haah2024,lee2025shallow,cui2025,zhang2025}. A particularly surprising result is the gluing lemma, which demonstrates that small random unitary operators with overlapping supports can be combined together to form a larger approximate random unitary~\cite{schuster2024}. This allows for the generation of a global approximate unitary using only two layers of local random unitaries. While effective, this method relies on intermediate unitaries to connect unitaries in disjoint regions. Therefore, to realize approximate random states using this method only with entanglement-free local operations, one should share different random states among multiple parties for every realization. 

In this manuscript, we demonstrate that approximate random states can be generated in multi-party systems using a fixed entangled state $\ket{\psi}$ and a single layer of local random unitaries. Importantly, $\ket{\psi}$ can be {an arbitrary entangled state, enabling to use optimized states for specific hardware platforms such as Bell pairs.} The key that allows this is that the entanglement within $\ket{\psi}$ effectively connects the local unitaries of individual parties. This establishes the role of entanglement in the gluing lemma introduced in Ref.~\cite{schuster2024} and provide its generalization to random quantum states, facilitating their more efficient generation. We find that the quality of the connection across each partition to produce approximate state design is precisely quantified by the second R\'enyi entanglement entropy $\mathcal{N}_2(\psi)$ of $\ket{\psi}$. From this, we show that $\mathcal{N}_2(\psi)$ yields the tightest bound on the error of the best approximate state design achievable from $\ket{\psi}$ without the use of entangling gates. Since our protocol only uses local unitaries, the tightness of this bound implies interestingly that classical communication, which cannot increase $\mathcal{N}_2(\psi)$, offers only marginal benefits for generating approximate random states.

Furthermore, by analogy to entanglement, we show that the second R\'enyi entropy of coherence quantifies the maximal achievable randomness without using coherent gates. A similar bound for stabilizer orbits is established in Ref.~\cite{bittel2025}. We emphasize that while exact characterization of these resources is often intractable~\cite{GURVITS2004, Baumgratz2014, Howard2017, aaronson2023quantum}, estimating their R\'enyi entropies is far more accessible in practice~\cite{Hastings2010,vanEnk2012,Huang_2020,Leone2022,oliviero2022,Haug2024}. On the other hand, practical uses of any single R\'enyi entropy are frequently obscure. Our findings provide a clear operational meaning of the second R\'enyi entropy as maximal capacities for generating randomness using resource-free operations.

\begin{figure}[t]
    \includegraphics[width=\linewidth]{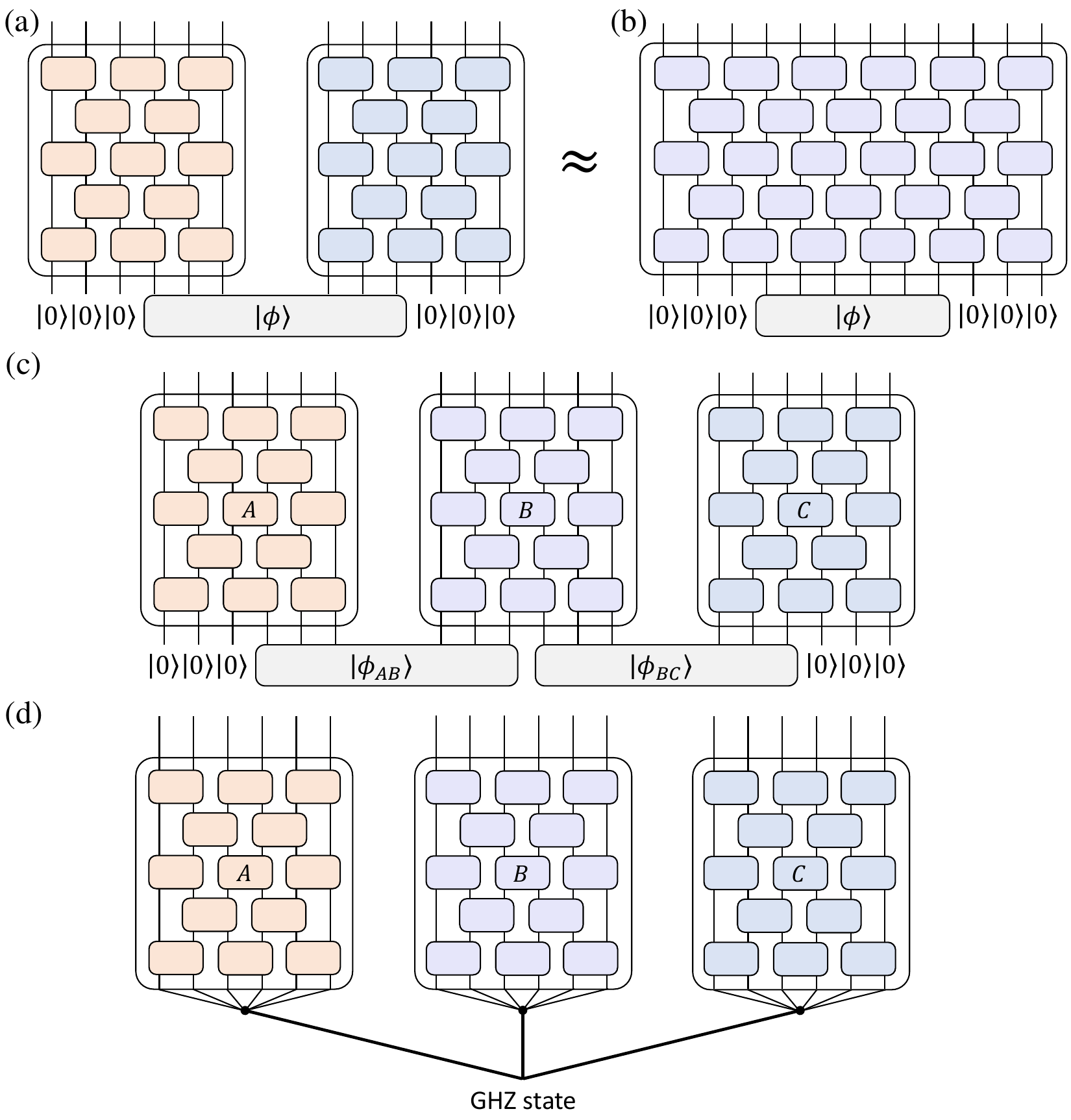}
    \centering
    \caption{\textbf{Approximate designs from entangled states and local operations.} (a) Alice (left) and Bob (right) initially share a state $\ket{\phi}$ having a non-vanishing second R\'enyi entanglement entropy. They apply local random unitary operators to scramble their systems. (b) Global random states are generated by random circuits. States generated in (a) global approximate random states in (b). (c,d) Alice (left) and Bob (mid) as well as Bob and Charlie (right) initially share states $\ket{\phi_{AB}}$ and $\ket{\phi_{BC}}$ or the GHZ state. These initial states have non-trivial second R\'enyi entanglement entropies. They then apply local random unitary operators and generate approximate random states.}
    \label{fig:example}
\end{figure}

\vspace{1em}
\noindent{\large{\bf Results}}\\
\noindent{{\bf Gluing local randomness via entangled state}}\\
We first show that approximate state designs can be generated from an entangled state $\ket{\psi}$ and local operations for bipartite systems. This result is illustrated in Fig.~\ref{fig:example}(a,b). We next generalize this to multipartite systems with local operations as well. As demonstrations, we consider methods for generating approximate state designs from quantum Markov chains~\cite{Hayden2004} and higher-level GHZ states~\cite{Ryu2013} as illustrated in Fig.~\ref{fig:example}(c,d). Errors of approximate designs are given by the second R\'enyi entanglement entropies of the states.

Before providing details, we introduce the constant entanglement orbit $\mathcal{E}_{\mathfrak{C},\mathrm{ent.}}(\psi)$ of a state $\ket{\psi}$ with a set $\mathfrak{C}$ of disjointed regions as the ensembles constructed by applying Haar random unitary operators supported in each of regions of $\mathfrak{C}$ to $\ket{\psi}$. The formal definition of this ensemble is given in the Methods section, in parallel with the definitions of approximate state designs and the second Rényi entropy. For bipartite systems, we drop the subscript `$\mathfrak{C}$'. Below, we first show that $\mathcal{E}_{\mathrm{ent.}}(\psi)$ forms an approximate state design for bipartite systems with error upper bounded by the second R\'enyi entanglement entropy $\mathcal{N}_2(\psi)$.

\begin{theorem}\label{thm:upper-bound}
    For any pure state $\ket{\psi}$ and an integer $t\geq 2$, $\mathcal{E}_{\mathrm{ent.}}(\psi)$ forms approximate state $t$-designs with error $O(t^2 e^{-\mathcal{N}_2(\psi)})$.
\end{theorem}
We defer the detailed proof to Supplementary Note 1 and instead present several key insights underlying this theorem. First, averaging over Haar random unitaries approximately gives permutation operators that mix $t$-copies~\cite{schuster2024}. Second, permutation operators in different regions are synchronized by entanglement of $\ket{\psi}$. The quality of this synchronization is precisely given by $O(t^2\exp(-\mathcal{N}_2(\psi)))$, giving the stated upper bound on the approximation error. As we show below, these insights can be used to generalize the theorem to multipartite settings.

\textbf{Theorem}~\ref{thm:upper-bound} immediately implies that the error of approximate random states can be suppressed exponentially by increasing the second R\'enyi entanglement entropy of the input state $\ket{\psi}$. More precisely, starting from a simple resource state, such as products of a few Bell pairs and computational states, we can generate approximate random states by applying entanglement-free operations as illustrated in Fig.~\ref{fig:example}(a).

Next, we generate approximate state designs {in multipartite systems} using quantum Markov chains~\cite{Hayden2004}. Specifically, we consider $\mathcal{E}_{\mathfrak{C},\mathrm{ent.}}(\psi)$ constructed from a pure quantum Markov chain $\ket{\psi}_\mathfrak{C}$ with a sequence $\mathfrak{C}$ of disjointed regions $\{A_i\}_{i=1}^l$ by applying Haar random unitary operators supported in $\{A_i\}_{i=1}^l$ (see Fig.~\ref{fig:example}(c) for $\abs{\mathfrak{C}}=3$).
\begin{theorem}\label{thm:markov-upper-bound}
    For any pure quantum Markov chain $\ket{\psi}_{\mathfrak{C}}$ with a sequence $\mathfrak{C}=\{A_i\}_{i=1}^l$, $\mathcal{E}_{\mathfrak{C},\mathrm{ent.}}(\psi)$ forms an approximate state $t$-design with error $O\left(t^2\sum_{i=1}^{l-1} e^{-\mathcal{N}_2^{(\mathcal{A}_i)}(\psi)}\right)$ and $\mathcal{A}_i=\bigcup_{j=1}^{i}A_j$.
\end{theorem}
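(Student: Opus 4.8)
The plan is to run the tripartite argument once for each of the $l-1$ cuts of the chain, so that the permutation operators produced by the regional Haar twirls glue together across the whole sequence. First I would invoke the structure of a pure quantum Markov chain: iterating the tripartite factorization (vanishing conditional mutual information on each consecutive triple) yields interior unitaries $V_{A_2},\dots,V_{A_{l-1}}$ and a bipartition $A_i=A_i^L A_i^R$ of each interior region such that
\[
    |\tilde\psi\rangle := V_{A_2}\cdots V_{A_{l-1}}\ket{\psi} = \ket{\phi_1}_{A_1 A_2^L}\otimes\ket{\phi_2}_{A_2^R A_3^L}\otimes\cdots\otimes\ket{\phi_{l-1}}_{A_{l-1}^R A_l}.
\]
Each $\ket{\phi_i}$ carries a Schmidt decomposition $\sum_x \lambda^{(i)}_x\ket{x}\ket{x}$ across its bond, whose spectrum $\{(\lambda^{(i)}_x)^2\}$ coincides with the nonzero spectrum of $\rho_{\mathcal{A}_i}$ (the other factors being internal to $\mathcal{A}_i$ or to its complement), so that $\sum_x (\lambda^{(i)}_x)^4 = \tr(\rho_{\mathcal{A}_i}^2) = e^{-\mathcal{N}_2^{(\mathcal{A}_i)}(\psi)}$. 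Because the $V_{A_i}$ are supported on single regions, the ensemble $\mathcal{E}_{\mathfrak{C},\mathrm{ent.}}(\psi)$ is unchanged if we replace $\ket{\psi}$ by $|\tilde\psi\rangle$, by unitary invariance of the Haar twirls on the interior regions.

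Next I would define the distinct-index-projected $t$-copy state $|\tilde\psi^{(t)}_{\mathrm{dist}}\rangle$ by projecting, \emph{simultaneously at all $l-1$ bonds}, onto the subspace in which the $t$ Schmidt indices at each bond are pairwise distinct. Its overlap $\tilde p_{\mathrm{dist}}$ with $|\tilde\psi\rangle^{\otimes t}$ omits only those configurations that contain a collision at some bond; a union bound over the $l-1$ bonds and the $\binom{t}{2}$ copy-pairs (the remaining indices summing to one) gives
\[
    1-\tilde p_{\mathrm{dist}} \leq \binom{t}{2}\sum_{i=1}^{l-1} e^{-\mathcal{N}_2^{(\mathcal{A}_i)}(\psi)},
\]
and hence, by the same algebra as in the proof of Theorem~\ref{thm:ent-bound},
\[
    \mathrm{TD}\!\left(|\tilde\psi^{(t)}_{\mathrm{dist}}\rangle\langle\tilde\psi^{(t)}_{\mathrm{dist}}|,\, |\tilde\psi\rangle\langle\tilde\psi|^{\otimes t}\right) \leq \frac{t^2}{2}\sum_{i=1}^{l-1} e^{-\mathcal{N}_2^{(\mathcal{A}_i)}(\psi)}.
\]

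Then I apply the approximate Haar twirl of Ref.~\cite{schuster2024} to each region $A_1,\dots,A_l$ of $|\tilde\psi^{(t)}_{\mathrm{dist}}\rangle\langle\tilde\psi^{(t)}_{\mathrm{dist}}|$. Each $\Phi^{(t)}_{\mathcal{U}(\mathcal{H}_{A_i})}$ contributes a sum over permutations $P^{(A_i)}_{\sigma_i}$ weighted by trace factors that, on the distinct sector, reduce to Kronecker deltas matching the indices carried into $A_i$ through its incident bonds. As in the tripartite case, the delta at bond $i$ forces the permutations on its two sides to agree, and propagating this equality down the chain collapses $\sigma_1,\dots,\sigma_l$ to a single global $\sigma$, leaving
\[
    \rho^{(t)}_{\mathrm{app.}} = \frac{t!}{2^{nt}}\,\tilde p_{\mathrm{dist}}\,\Pi^{(t)}_{\mathrm{sym}} + O\!\left(t^2\sum_{i=1}^{l-1} e^{-\mathcal{N}_2^{(\mathcal{A}_i)}(\psi)}\right),
\]
where I have absorbed the regional twirl errors $O(t^2/N_{A_i})$ into the Rényi terms using $e^{-\mathcal{N}_2^{(\mathcal{A}_i)}(\psi)}\gtrsim 1/N_{A_i}$. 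Comparing $\rho^{(t)}_{\mathrm{app.}}$ with $\rho^{(t)}_{\mathrm{Haar}}=\Pi^{(t)}_{\mathrm{sym}}/\binom{N+t-1}{t}$ and combining with the two error contributions above through the triangle and convexity inequalities yields the stated bound $O\!\left(t^2\sum_{i=1}^{l-1} e^{-\mathcal{N}_2^{(\mathcal{A}_i)}(\psi)}\right)$.

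The step I expect to be the main obstacle is the chained gluing in the last paragraph, specifically the interior regions. Each interior $A_i$ carries \emph{two} bond indices (one from $A_{i-1}^R A_i^L$ and one from $A_i^R A_{i+1}^L$), so its single twirl $\Phi^{(t)}_{\mathcal{U}(\mathcal{H}_{A_i})}$ must be shown to enforce \emph{one} permutation $\sigma_i$ shared consistently by both bonds, rather than independent permutations on the two halves. Verifying that the distinct-index projection together with the orthogonality of permutation operators on the distinct sector indeed ties the two halves to a common $\sigma_i$, and that this is compatible simultaneously along the entire chain, is the crux; once it is established for a single interior region, the inductive collapse along the sequence is routine.
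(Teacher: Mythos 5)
Your proposal is correct and follows essentially the same route as the paper: factor the Markov chain into bond states via interior unitaries, restrict to the copy-wise distinct sector at each bond, apply the approximate Haar twirl of Ref.~\cite{schuster2024} region by region, collapse the regional permutations to a single global one, and absorb the twirl errors $O(t^2/N_{A_i})$ into the R\'enyi terms (your bound $e^{-\mathcal{N}_2^{(\mathcal{A}_i)}(\psi)}\geq 1/N_{A_i}$, which holds because the purity of $\rho_{\mathcal{A}_i}$ is at least the inverse Schmidt rank of bond $i$, is the right justification). The only organizational difference is that you project all $l-1$ bonds simultaneously, as the paper does in its tripartite warm-up, whereas the paper's general proof introduces the distinct-sector restriction bond by bond inside an induction on the number of glued regions, Eq.~\eqref{eq:gluing-ansatz}. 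One clarification on your flagged crux: there is no danger of an interior region $A_i$ acquiring independent permutations on its two halves, since the single twirl $\Phi^{(t)}_{\mathcal{U}(\mathcal{H}_{A_i})}$ produces operators labeled by one permutation $\sigma_i$, and a copy permutation factorizes as $P_{\sigma_i}^{(A_i)}=P_{\sigma_i}^{(A_{i,L})}\otimes P_{\sigma_i}^{(A_{i,R})}$ with the same $\sigma_i$ on both halves by construction. The genuine content is the cross-bond matching: on the distinct sector the coefficient $\tr\bigl(\ketbra{\{x_k\}}{\sigma_{i-1}^{-1}(\{x_k\})}_{A_{i,L}}\otimes\ketbra{\{z_k\}}{\{w_k\}}_{A_{i,R}}\,P_{\sigma_i^{-1}}^{(A_i)}\bigr)$ vanishes unless $\sigma_i=\sigma_{i-1}$ and $\{w_k\}$ is the corresponding $\sigma_i$-permutation of $\{z_k\}$, which is exactly the inductive step the paper carries out; with that observation your outline completes to the paper's proof.
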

We prove this theorem in Supplementary Note 2. The above theorem admits a particularly simple interpretation when applied to a tensor product of initially distributed Bell pairs between neighboring regions, which constitutes a pure quantum Markov chain. In this case, similar to the bipartite case \textbf{Theorem}~\ref{thm:upper-bound}, approximate state designs can be generated by initially sharing Bell pairs across each partition and applying local random unitaries. The initial Bell pairs, more generally, second R\'enyi entanglements of the partitions of the Markov chain, effectively glue these local random unitaries and generate global approximate state designs, providing a state version of the gluing lemma introduced in Ref.~\cite{schuster2024}. We compare our findings with the original gluing lemma for generating approximate unitary designs in Fig.~\ref{fig:gluing}.

For the same multipartition $\{A_i\}_{i=1}^l$, approximate state designs can also be generated from a genuinely multipartite entangled state. As a demonstration, we consider higher-level GHZ states as the following theorem. We prove this theorem in Supplementary Note 3.
\begin{theorem}
    For any set $\mathfrak{C}$ of disjointed regions $\{A_i\}_{i=1}^l$ and a positive integer $d\leq \min_{1\leq i\leq l}(2^{|A_i|})$, $\mathcal{E}_{\mathfrak{C},\mathrm{ent.}}(\psi)$ generated from the $d$-level GHZ state $\ket{\psi}=\frac{1}{\sqrt{d}}\sum_{x=1}^d\bigotimes_{i=1}^l\ket{x}_{A_{i}}$ forms an approximate state $t$-design with error $O\left(t^2/d\right)$.
\end{theorem}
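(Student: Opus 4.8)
The plan is to run the single-cut argument of Theorem~\ref{thm:ent-bound} in the fully multipartite setting, gluing all $l$ regions at once through the single computational index that the GHZ state shares across every subsystem. The starting observation is that every bipartition of $\{A_i\}_{i=1}^l$ cuts $\ket{\psi}$ into a flat Schmidt spectrum of $d$ coefficients, all equal to $1/\sqrt{d}$, so $e^{-\mathcal{N}_2^{(\cdot)}(\psi)}=1/d$ for \emph{every} cut. I stress that $\ket{\psi}$ is \emph{not} a short quantum Markov chain --- for a tripartition its conditional mutual information equals $\log d\neq 0$ --- so the chain induction of the preceding theorem cannot be invoked, and the gluing must instead be carried out in a single step.

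First I would introduce the distinct-index projected state
\begin{equation}
    \ket{\psi^{(t)}_{\mathrm{dist}}} = \frac{1}{d^{t/2}}\sum_{\{x_k\}_{k=1}^t\in[d]^t_{\mathrm{dist}}}\bigotimes_{i=1}^l\ket{\{x_k\}_{k=1}^t}_{A_i},
\end{equation}
with overlap $p_{\mathrm{dist}}=\prod_{k=0}^{t-1}(1-k/d)$, so that $1-p_{\mathrm{dist}}\le\binom{t}{2}/d$. The same trace-distance estimate as in Theorem~\ref{thm:ent-bound} then yields $\mathrm{TD}(\ketbra{\psi^{(t)}_{\mathrm{dist}}},\ketbra{\psi}^{\otimes t})\le t^2/(2d)$, and since the $l$-fold local twirl is CPTP and hence contractive, I may replace $\ket{\psi}^{\otimes t}$ by $\ket{\psi^{(t)}_{\mathrm{dist}}}$ at an additive cost of $t^2/(2d)$.

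The core of the argument is the gluing step. Applying the approximate Haar twirl of Ref.~\cite{schuster2024} independently in each region sends $\ketbra{\{x_k\}}{\{y_k\}}_{A_i}$ to $\frac{1}{N_i^t}\sum_{\sigma_i\in S_t}\langle\{y_k\}|P_{\sigma_i^{-1}}|\{x_k\}\rangle_{A_i}\,P_{\sigma_i}^{(A_i)}$ up to $O(t^2/N_i)$, with $N_i=2^{|A_i|}\ge d$. Because all $l$ regions carry the \emph{same} distinct tuples $\{x_k\}$ and $\{y_k\}$, the product of the $l$ matrix elements enforces $\sigma_1(\{x_k\})=\dots=\sigma_l(\{x_k\})=\{y_k\}$; distinctness of $\{x_k\}$ makes $\sigma\mapsto\sigma(\{x_k\})$ injective on $S_t$, so the $l$ independent permutation sums collapse onto one global $\sigma$, with $\{y_k\}$ thereby determined. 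Using $\bigotimes_i P_\sigma^{(A_i)}=P_\sigma$, $\prod_i N_i^t=2^{nt}$, and that $[d]^t_{\mathrm{dist}}$ has $d^t p_{\mathrm{dist}}$ elements, the twirled state simplifies to $\frac{p_{\mathrm{dist}}}{2^{nt}}\sum_{\sigma\in S_t}P_\sigma$, up to the accumulated twirl error $O\!\left(t^2\sum_i 1/N_i\right)$.

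Finally I would compare with $\rho^{(t)}_{\mathrm{Haar}}=\Pi^{(t)}_{\mathrm{sym}}/\binom{N+t-1}{t}$, $N=2^n$. Since $\frac{t!}{N^t}\binom{N+t-1}{t}=\prod_{k=1}^{t-1}(1+k/N)=1+O(t^2/N)$, the simplified state equals $p_{\mathrm{dist}}\bigl(1+O(t^2/N)\bigr)\rho^{(t)}_{\mathrm{Haar}}$, so $\mathrm{TD}=\tfrac12(1-p_{\mathrm{dist}})+O(t^2/N)$. Summing the distinctness cost $t^2/(2d)$, the twirl cost $O(t^2\sum_i 1/N_i)$, and the normalization cost $O(t^2/N)$, and using $N_i,N\ge d$, delivers the claimed $O(t^2/d)$. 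I expect the gluing to be the main obstacle: the crux is verifying that the common shared index collapses all $l$ twirls onto a single global permutation in one stroke (in contrast to the pairwise gluing along a chain in the Markov-chain theorem, which is unavailable here), and that the $l$ separate $O(t^2/N_i)$ twirl errors are each tamed by $N_i\ge d$ so that --- treating the number of regions $l$ as fixed --- their total stays $O(t^2/d)$.
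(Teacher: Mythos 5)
Your proposal follows essentially the same route as the paper's proof: project onto copy-wise distinct GHZ indices (identical state and overlap bound), apply the approximate Haar twirl of Ref.~\cite{schuster2024} region by region, observe that the shared distinct tuples collapse all $l$ permutation sums onto a single global $P_\sigma$, and compare the result with $\rho^{(t)}_{\mathrm{Haar}}$. Your bookkeeping of the twirl error as $O\!\left(t^2\sum_i 1/N_i\right)$ controlled by $N_i\geq d$ is in fact slightly more careful than the paper's stated $O(lt^2/2^n)$, but the argument and conclusion coincide.
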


\begin{figure}[t]
    \includegraphics[width=\linewidth]{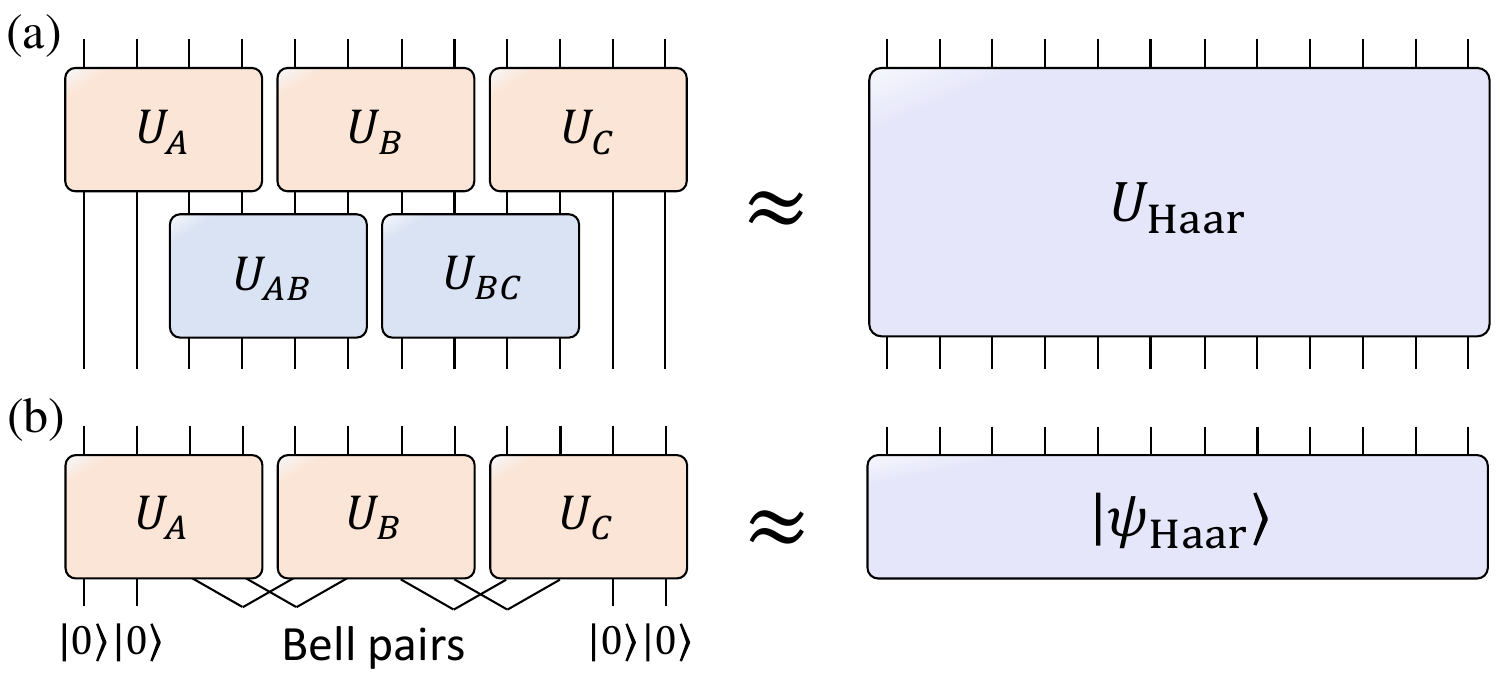}
    \centering
    \caption{\textbf{Gluing local randomness.} (a) Small approximate unitary designs $U_{AB}$ and $U_{BC}$ glue other approximate unitary designs $U_A$, $U_B$, and $U_C$ to give a global approximate unitary design. (B) Bell pairs glue local approximate unitary designs $U_A$, $U_B$, and $U_C$ and enable the generation of a global approximate state design by a single layer of local random unitaries.}
    \label{fig:gluing}
\end{figure}

\begin{figure}[t]
    \includegraphics[width=\linewidth]{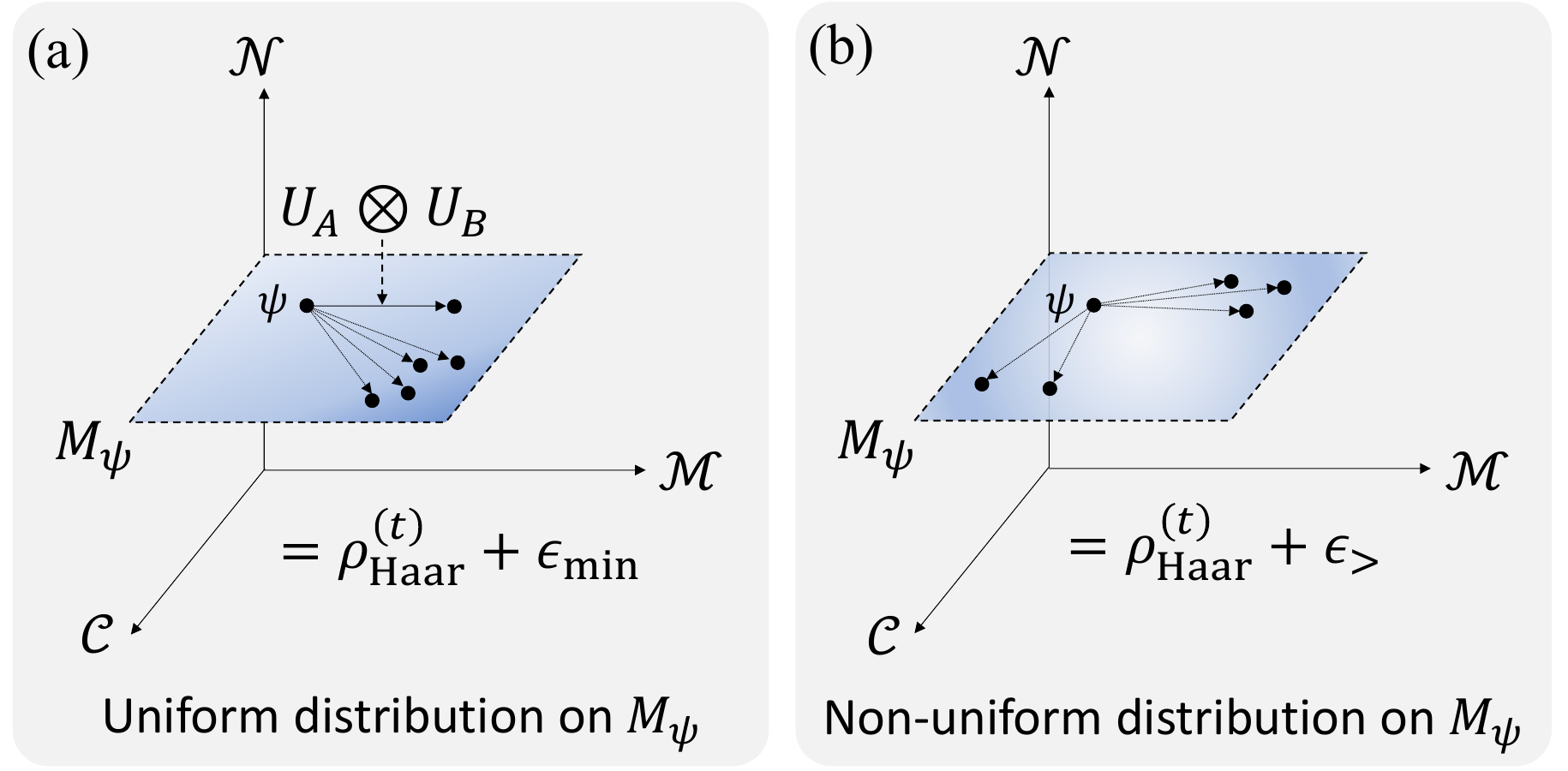}
    \centering
    \caption{\textbf{Quantifying maximal achievable randomness.} (a,b) $\mathcal{N}$, $\mathcal{M}$, and $\mathcal{C}$ represent amounts of entanglement, magic, and coherence in arbitrary fixed measures. $M_\psi$ is the set of states that can be constructed by applying entanglement-free operators $U_A\otimes U_B$ to $\ket{\psi}$. $\rho^{(t)}_\mathrm{Haar}$ is the $t$-th moments of the ensemble of Haar random states. (a) The ensemble $\mathcal{E}_\mathrm{ent.}(\psi)$ of states in $M_\psi$ with the uniform distribution according to the Haar measure has the minimum deviation $\epsilon_\mathrm{min}$ from $\rho^{(t)}_\mathrm{Haar}$. States in this ensemble typically have near maximal magic and coherence. (b) The ensemble of states in $M_\psi$ with a non-uniform distribution has a larger deviation $\epsilon_>$ from $\rho^{(t)}_\mathrm{Haar}$, \textit{i.e.}, $\|\epsilon_>\|_1\geq\|\epsilon_\mathrm{min}\|_1$. 
    }
    \label{fig:orbit}
\end{figure}

\vspace{1em}
\noindent{\bf Second R\'enyi entropy as maximal achievable randomness}\\
We previously established that an approximate state design can be generated from a fixed state $\ket{\psi}$, with an error upper bounded by the second R\'enyi entanglement entropy of $\ket{\psi}$. Here, we establish that this upper bound is tight. Moreover, we find that an analogous result holds for coherence. Furthermore, we propose that the second R\'enyi entropies for entanglement and coherence can be used to quantify the maximal achievable Haar randomness in the context of approximate state designs, as illustrated in Fig.~\ref{fig:orbit}. 

We first consider coherence and its relation to the maximal achievable Haar randomness. To make this statement precise, we introduce the constant coherence orbit
$\mathcal{E}_\mathrm{coh.}(\psi)$ of a state $\ket{\psi}$ defined as the ensemble generated by applying coherence-free unitary operators to $\ket{\psi}$ (see Methods for a precise definition). We show that $\mathcal{E}_\mathrm{coh.}(\psi)$ forms an approximate state design, with an error that is exponentially suppressed by the second R\'enyi entropy of coherence, $\mathcal{C}_2(\psi)$, in the following lemma. The proof is in Supplementary Note 4.

\begin{lemma}\label{thm:coh-upper-bound}
    For any pure state $\ket{\psi}$ and an integer $t\geq 2$, $\mathcal{E}_{\mathrm{coh.}}(\psi)$ forms an approximate state $t$-design with error $O(t^2 e^{-\mathcal{C}_2(\psi)})$.
\end{lemma}

We also find that the upper bounds of errors in \textbf{Theorem}~\ref{thm:upper-bound} and \textbf{Lemma}~\ref{thm:coh-upper-bound} are in fact \textit{tight} due to the following theorem. 

\begin{theorem}\label{thm:tight-bound}
    For any pure state $\ket{\psi}$ and a constant $t$, $\mathcal{E}_\mathrm{ent.}(\psi)$ and $\mathcal{E}_\mathrm{coh.}(\psi)$ form approximate state $t$-designs with error $\Theta(e^{-\mathcal{N}_2(\psi)})$ and $\Theta(e^{-\mathcal{C}_2(\psi)})$, respectively.
\end{theorem}
\begin{proof}
    Due to Ref.~\cite{lee2025shallow}, if an ensemble of states forms an $\epsilon$-approximate $t$-design for some constant $t$, then its mean $\mathcal{N}_2$ and $\mathcal{C}_2$ are lower bounded by $\log\left( \Omega(1/\epsilon)\right)$. This, together with \textbf{Theorem}~\ref{thm:upper-bound} and \textbf{Lemma}~\ref{thm:coh-upper-bound}, gives the stated tight bounds.
\end{proof}
A similar discussion regarding magic appears in Ref.~\cite{bittel2025}, where the stabilizer orbit of a state $\ket{\psi}$ forms an approximate state $t$-designs. The error is found to be upper bounded by $O(2^{t^2/2}e^{-\mathcal{M}_2(\psi)})$ with the second stabilizer R\'enyi entropy $\mathcal{M}_2(\psi)$ of $\ket{\psi}$~\cite{Leone2022} (The definition of $\mathcal{M}_2(\psi)$ can be found in Methods.). However, the tightness of this upper bound has not been proven. In contrast, we establish the tight bounds for entanglement and coherence.

So far, we have shown that the approximation errors are tightly bounded by the second Rényi entropies of entanglement and coherence. We further demonstrate that these errors are minimal among all ensembles constructible using resource-free operations, as established by the following theorem. The proof can be found in Supplementary Note 5.
\begin{theorem}\label{thm:uniform-optimal}
    For any pure state $\ket{\psi}$, $\mathcal{E}_\mathrm{ent.}(\psi)$ and $\mathcal{E}_\mathrm{coh.}(\psi)$ form approximate state designs with minimum errors among those can be constructed using entanglement and coherence-free gates, respectively.
\end{theorem}

This constitutes one of our central results. The theorem establishes a fundamental limit: any approximation of randomness generated using only resource-free operations is bounded by the second R\'enyi entropies of the initial state. This result is illustrated in Fig.~\ref{fig:orbit}.

We next present another key result. While our analysis has mainly focused on the second R\'enyi entropies, one may ask whether R\'enyi entropies of other orders $\alpha$ may provide tighter approximation errors. However, we find that among all Rényi entropies, the second Rényi entropy yields the tightest bounds.

\begin{theorem}[Informal]
    The second R\'enyi entropies give the tightest bounds for the approximation errors.
\end{theorem}
We prove this theorem in Supplementary Note 6. Therefore, the second R\'enyi entropy can be interpreted as the maximal achievable Haar randomness without using resourceful gates.

Finally, we present the following observation: an ensemble constructed from the constant-entanglement–and–constant-coherence orbits of an input state $\ket{\psi}$ forms an approximate state design with a nearly minimal approximation error. To better appreciate this statement, consider the ensemble $\mathcal{E}_{\rm ent.}(\psi)$. Although $\mathcal{E}_{\rm ent.}(\psi)$ consists of states with fixed entanglement, their coherence is typically close to maximal. This follows directly from the construction of $\mathcal{E}_{\rm ent.}(\psi)$, which is generated by applying Haar-random unitary operators independently within each region to $\ket{\psi}$. One might therefore expect that at least one of these resources must be extensively consumed to generate high-quality approximate random states. Surprisingly, this is not the case. To formalize this observation, we introduce in Methods the constant-entanglement–and–constant-coherence orbit $\mathcal{E}_{\mathrm{e.c.}}(\psi)$ of an input state $\ket{\psi}$, which will be used in the theorem below.

\begin{theorem}\label{thm:rsp}
    For a constant $t\geq 2$, there is a quantum state $\ket{\psi}$ with $\mathcal{N}_2(\psi)=\mathcal{C}_2(\psi)$ which gives and the constant-entanglement–and–constant-coherence orbit $\mathcal{E}_\mathrm{e.c.}(\psi)$ forming an approximate state $t$-design with an error that scales with $\Theta(e^{-\mathcal{N}_2(\psi)})$. 
\end{theorem}

We can construct an ensemble of states satisfying this theorem by slightly modifying the ensemble $\mathcal{E}_\mathrm{rsp}$ of random subset phase states~\cite{aaronson2023quantum}. $\mathcal{E}_\mathrm{rsp}$ is defined by two random objects: random permutation $p:\{0,1\}^n\rightarrow\{0,1\}^n$ and random function $f:\{0,1\}^n\rightarrow\{0,1\}$ with the number of qubits $n$. Each state in $\mathcal{E}_\mathrm{rsp}$ with the subsystem size $k$ is given by
\begin{equation*}
    \ket{\psi_{p,f}} = \frac{1}{\sqrt{2^k}}\sum_{b\in\{0,1\}^k}(-1)^{f(b,0^{n-k})}\ket{p(b,0^{n-k})}.
\end{equation*}
By construction, these states have constant coherence of $\mathcal{C}_2(\psi)=k\log 2$. Furthermore, $\mathcal{E}_\mathrm{rsp}$ is known to form an $\epsilon$-approximate state $t$-design with $\epsilon=\Theta(2^{-k})$~\cite{lee2025shallow}. In Supplementary Note 7, we show that {this modified ensemble has constant entanglement of $\mathcal{N}_2(\psi)=k\log 2$ while having the same order of approximation error $\Theta(2^{-k})=\Theta(e^{-\mathcal{N}_2(\psi)})=\Theta(e^{-\mathcal{C}_2(\psi)})$. }

\vspace{1em}
\noindent{\bf Generation of pseudorandom states}\\
In addition to approximate state designs, pseudorandom states serve as another widely accepted approximation scheme for Haar random states~\cite{Ji2018,brakerski2019,aaronson2023quantum,gu2023little}. These states are computationally indistinguishable from Haar random states, \textit{i.e.}, there is no poly-time quantum algorithm distinguishing them~\cite{Ji2018}. We find that our construction of gluing randomness (Fig.~\ref{fig:gluing}) can also be used to generate pseudorandom states, a result that can be directly inferred from the operational meaning of approximate state designs. Specifically, if an ensemble $\mathcal{E}$ of states forms an $\epsilon$-approximate state $t$-design, then any quantum algorithm using $t$ copies of states can distinguish $\mathcal{E}$ from Haar random states with a success probability at most $(1+\epsilon)/2$. Consequently, if $\epsilon$ is $1/\omega(\mathrm{poly}(n))$ for any $t=\mathrm{poly}(n)$ with the number of qubits $n$, then states uniformly sampled from $\mathcal{E}$ are pseudorandom. Based on this, we establish the following theorem.
\begin{theorem}
    For a $n$-qubit pure quantum Markov chain $\ket{\psi}_\mathfrak{C}$ with a sequence $\mathfrak{C}=\{A_i\}^l_{i=1}$ such that $|A_i|=\mathrm{polylog}(n)$ for all $i\in[1,l]$, if $\mathcal{N}_2^{(\mathcal{A}_i)}(\psi)$ is at least $\mathrm{polylog}(n)$ for all $i\in[1,l-1]$ with $\mathcal{A}_i=\bigcup_{j=1}^i A_j$, then pseudorandom states can be generatedf rom $\ket{\psi}_\mathfrak{C}$ by applying $\mathrm{polylog}(n)$ depth local operations.
\end{theorem}
\begin{proof}
    Let $\mathcal{E}_\mathrm{pseudo}$ be an ensemble generated by applying pseudorandom unitaries in each of $\{A_i\}_{i=1}^l$ to $\ket{\psi}_\mathfrak{C}$. Such a $\mathrm{polylog}(n)$-sized pseudorandom unitary can be implemented in $\mathrm{polylog}(n)$ depth~\cite{schuster2024,ma2025,schuster2025}. By the definition of pseudorandom unitary, $\mathcal{E}_\mathrm{pseudo}$ is computationally indistinguishable from $\mathcal{E}_{\mathfrak{C},\mathrm{ent.}}(\psi)$. Since $\mathcal{N}_2^{(\mathcal{A}_i)}(\psi)$ is $\omega(\log n)$ for all $i\in[1,l-1]$, due to \textbf{Theorem}~\ref{thm:markov-upper-bound}, $\mathcal{E}_{\mathfrak{C},\mathrm{ent.}}(\psi)$ forms an approximate state $t$-design with error upper bounded by $t^2/g(n)$ for some $g\in \omega(\mathrm{poly}(n))$. Therefore, for any $t=\mathrm{poly}(n)$, the error is given by $1/\omega(\mathrm{poly}(n))$. Thus, states in $\mathcal{E}_\mathrm{pseudo}$ are computationally indistinguishable from Haar random states.
\end{proof}
This implies that pseudorandom states can be generated by sharing $\mathrm{polylog}(n)$ Bell pairs and applying a single layer of small pseudorandom unitaries as illustrated in Fig.~\ref{fig:gluing}. To the best of our knowledge, this constitutes the first construction of pseudorandom states using locally shared Bell pairs in multipartite systems. We note that the required second R\'enyi entanglement entropy saturates the lower bound established for pseudorandom states in Ref.~\cite{aaronson2023quantum}.

\vspace{1em}
\noindent{\large{\bf Discussion}}\\
In this work, we demonstrate that approximate state designs can be generated from an entangled state with local random unitaries. Furthermore, we show that the approximation error is tightly bounded by the second R\'enyi entanglement entropy of the initial state, providing its clear operational meaning. We show that the approximation error decays exponentially as the second R\'enyi entanglement entropy increases, indicating that the initial number of Bell pairs required to achieve the desired error could be exponentially small. This finding also enables the generation of pseudorandom states~\cite{Ji2018} from fixed shared states such as Bell pairs and a single layer of local pseudorandom unitaries. Our approach leads to a more efficient generation of approximate random states without requiring the two-layer construction for gluing unitaries~\cite{schuster2024}, making it particularly well suited for distributed quantum computing scenarios~\cite{CALEFFI2024,Main2025}. At the same time, we find that the same tight bound holds for constantly coherent ensembles. This implies that the second R\'enyi entropies are not only more experimentally accessible measures of quantum resources, but also operationally useful quantifiers of a state's capacity to generate randomness.

Our findings may have profound connections to black hole physics, where Bell pairs generated at the event horizon undergo fast scrambling dynamics~\cite{Hayden_2007,Sekino_2008}. It would be interesting to generalize our findings to random unitaries under symmetry constraints such as energy-conserving chaotic Hamiltonian dynamics. Given our finding that approximate random states can be generated from certain multipartite entangled states, it would also be interesting to explore whether there exists a multipartite entangled quantum state that cannot be used to generate approximate random states via local operations. Further, while we find tight bounds for entanglement and coherence, it is still an open question if the same R\'enyi entropy bound holds for magic as well. 

\vspace{1em}
\noindent{\large{\bf Methods}}\\
\noindent{\bf Faithful quantum resources}\\
Entanglement, magic, and coherence are necessary and sufficient resources for universal computation. This can be understood by considering the universal gate set of $\{CNOT,T,H\}$. Each element increases entanglement, magic, and coherence, respectively, while preserving other resources. Therefore, these resources can be thought of as faithful quantum resources.

These resources can be quantified by their R\'enyi entropies. Let $\ket{\psi}$ be a $n$-qubit state. We set $A$ be a subsystem of the system. Then, the Schmidt decomposition of $\ket{\psi}$ with respect to $A$ is given by $\ket{\psi}=\sum_{x\in[N_A]}\lambda_x\ket{x}_A\otimes \ket{x}_{A^c}$ with $N_A=2^{\abs{A}}$. Then, the second R\'enyi entanglement entropy is given by~\cite{Pasquale2004}
\begin{equation*}
    \mathcal{N}^{(A)}_2(\psi) = -\log\left(\sum_{x\in[N_A]}\lambda_x^4\right).
\end{equation*}
For simplicity, we suppress the superscript $(A)$, retaining it only when necessary. Next, let us set expectation values of Pauli strings to be $w_x=N^{-1}\abs{\bra{\psi}P_x\ket{\psi}}^2$ for any $x\in[N^2]$ with $N=2^n$. Then, the second R\'enyi entropy for magic, known as the second stabilizer R\'enyi entropy, is given by~\cite{Leone2022}
\begin{equation*}
    \mathcal{M}_2(\psi) = -\log\left(\sum_{x\in[N^2]}w_x^2\right)-\log N.
\end{equation*}
Finally, let us expand $\ket{\psi}$ in the computational basis as $\ket{\psi}=\sum_{x\in[N]}c_x\ket{x}$. We define the second R\'enyi entropy for coherence as
\begin{equation*}
    \mathcal{C}_2(\psi) = -\log\left(\sum_{x\in[N]}\abs{c_x}^4\right).
\end{equation*}
This is vanishing for computational states, which verify the faithfulness, and invariant under coherence preserving unitary operators $\{U_p U_f|U_p\in P(\mathcal{H}), U_f\in [U(1)]^{\oplus N}\}$ with the sets of permutation operators $P(\mathcal{H})$ in the Hilbert space $\mathcal{H}$ of the system and phase factors $U(1)$. 

\vspace{1em}
\noindent{\bf Approximate state design}\\
The ensemble of Haar random states, denoted as $\mathcal{E}_\mathrm{Haar}$, is defined as the set of all quantum states distributed uniformly according to the Haar measure. Since $\mathcal{E}_\mathrm{Haar}$ contains arbitrary quantum states, its realization requires significant amounts of entanglement, magic, and coherence, which are resources necessary for universal quantum computation. Furthermore, it is well known that typical states in this ensemble possess these resources at near maximal levels, making their experimental implementation challenging. This difficulty motivates the development of approximation schemes for true random states, $\mathcal{E}_\mathrm{Haar}$. 

A prominent approximation scheme is the approximate state design. An ensemble $\mathcal{E}$ of states forms an $\epsilon$-approximate state $t$-design if the $t$-th moments of its states approximates that of $\mathcal{E}_\mathrm{Haar}$ up to an additive error $\epsilon$ in the trace distance:
\begin{equation*}
    \mathrm{TD}\left(\mathbb{E}_{\psi\sim\mathcal{E}}[\ketbra{\psi}^{\otimes t}],\mathbb{E}_{\phi\sim\mathcal{E}_\mathrm{Haar}}[\ketbra{\phi}^{\otimes t}]\right) \leq \epsilon,
\end{equation*}
where the trace distance between two states $\rho$ and $\sigma$ is defined as $\mathrm{TD}(\rho,\sigma)=\frac{1}{2}\|\rho-\sigma\|_1$. Operationally, this bound implies that the success probability of distinguishing $\mathcal{E}$ from $\mathcal{E}_\mathrm{Haar}$ via a $t$-copy measurement is at most $(1+\epsilon)/2$. Unlike the true random states, this statistically close ensemble can be efficiently implemented using shallow quantum circuits. Furthermore, their average resource contents can be significantly lower than the maximal values associated with Haar random states.

\vspace{1em}
\noindent{\bf Constant resource orbits}\\
Let $\mathfrak{C}$ be a set of disjointed regions $\{A_i\}_{i=1}^l$ that covers the entire system. Then, the constant entanglement orbit $\mathcal{E}_{\mathfrak{C},\mathrm{ent.}}(\psi)$ of a state $\ket{\psi}$ with $\mathfrak{C}$ is defined as
\begin{equation*}
    \mathcal{E}_{\mathfrak{C},\mathrm{ent.}}(\psi) = \left\{\bigotimes_{i=1}^l U_{A_i}\ket{\psi}|U_{A_i}\sim\mathcal{U}(\mathcal{H}_{A_i})\right\},
\end{equation*}
where $\mathcal{H}_{A_i}$ is the Hilbert space within $A_i$, and $\mathcal{U}(\mathcal{H}_{A_i})$ is the ensemble of Haar random unitary operators in $\mathcal{H}_{A_i}$.

Next, the constant coherence orbit $\mathcal{E}_\mathrm{coh.}(\psi)$ of $\ket{\psi}$ is defined as
\begin{equation*}
    \mathcal{E}_\mathrm{coh.}(\psi) = \left\{U \ket{\psi} | U \sim \mathcal{PF}(\mathcal{H}) \right\}.
\end{equation*}
Here, $\mathcal{PF}(\mathcal{H})$ is the ensemble of products of unitary operators $U_p$ and $U_f$, where $U_p$ is uniformly sampled from the set of permutations $P(\mathcal{H})$, and $U_f$ is a diagonal unitary operator acting on $\mathcal{H}$ whose elements are $e^{i\theta}$ with $\theta$ uniformly sampled from $[0,2\pi)$.

Finally, we define the constant entanglement and coherence orbit $\mathcal{E}_\mathrm{e.c.}(\psi)$ of $\ket{\psi}$ with a bipartition $\{A,A^c\}$ as
\begin{equation*}
    \begin{split}
        \mathcal{E}_\mathrm{e.c.}(\psi) 
        &= \{U_A\otimes U_{A^c}\ket{\psi}|U_A\sim\mathcal{PF}(\mathcal{H}_A),\\
        &\qquad\qquad\qquad\qquad U_{A^c}\sim\mathcal{PF}(\mathcal{H}_{A^c})\}.        
    \end{split}
\end{equation*}

\vspace{1em}
\noindent {\bf References}

\bibliography{Ref}

\vspace{1em}
\noindent{\bf Acknowledgments}\\
W.L. is supported by the KAIST Jang Young Sil Fellow Program. H.K. is supported by the KIAS Individual Grant No. CG085302 at Korea Institute for Advanced Study and National Research Foundation of Korea (Grants No.~RS-2023-NR119931, No.~RS-2024-00413957, and No.~RS-2024-00438415) funded by the Korea Government (MSIT). G.Y.C. is financially supported by Samsung Science and Technology Foundation under Project Number SSTF-BA2401-03, the NRF of Korea (Grants No. RS-2023-00208291, RS-2024-00410027, RS-2023-NR119931, RS-2024-00444725, RS-2023-00256050, IRS-2025-25453111) funded by the Korean Government (MSIT), the Air Force Office of Scientific Research under Award No. FA23862514026, and Institute of Basic Science under project code IBS-R014-D1.

\vspace{1em}
\noindent{\bf Author Contributions}\\
W.L., H.K., and G.Y.C. conceived and designed the project. W.L. performed theoretical analyses and calculations under the supervision of H.K. and G.Y.C. All authors contributed to discussions and to writing and revising the manuscript.

\vspace{1em}  
\noindent {\bf Competing Interests}\\
The authors declare no competing interests.

\end{document}